\newtheorem{definition}{Definition}
\newtheorem{theorem}{Theorem}
\newcommand{\G}{\mathcal G}
\renewcommand{\H}{\mathcal H}
\newcommand{\I}{\mathcal I}
\renewcommand{\P}{\mathbb P}
\newcommand{\C}{\mathbb C}
\newcommand{\E}{\mathbb E}
\newcommand{\blob}{\mathord{\bullet}}
\newcommand{\id}{\operatorname{id}}
\newcommand{\bang}{\operatorname{!}}
\newcommand\rel{\mathrel{\ooalign{\hfil$\mapstochar\mkern5mu$\hfil\cr$\longrightarrow$\cr}}}
\begin{document}


\title{String diagrams for game theory}
\author{Jules Hedges}
\maketitle

\begin{abstract}
This paper presents a monoidal category whose morphisms are games (in the sense of game theory, not game semantics) and an associated diagrammatic language. The two basic operations of a monoidal category, namely categorical composition and tensor product, correspond roughly to sequential and simultaneous composition of games. This leads to a compositional theory in which we can reason about properties of games in terms of corresponding properties of the component parts. In particular, we give a definition of Nash equilibrium which is recursive on the causal structure of the game.

The key technical idea in this paper is the use of continuation passing style for reasoning about the future consequences of players' choices, closely based on applications of selection functions in game theory. Additionally, the clean categorical foundation gives many opportunities for generalisation, for example to learning agents.
\end{abstract}


\section{Introduction}

This paper presents a monoidal category whose morphisms are games (in the sense of game theory, not game semantics) and an associated diagrammatic language. The two basic operations of a monoidal category, namely categorical composition and tensor product, correspond roughly to sequential and simultaneous composition of games. This leads to a compositional theory in which we can reason about properties of games in terms of corresponding properties of the component parts. In particular, we give a definition of Nash equilibrium which is recursive on the causal structure of the game.

The context of this paper begins with \cite{pavlovic09}, which for the first time approached game theory using ideas from program semantics. Many ideas used in this paper first appear there, such as the idea that a game should be a `process' mapping observations to choices, and the use of monads (or equivalently premonoidal categories) to model game-theoretic side effects such as probabilistic choice. Nevertheless, the game theory developed in that paper is no more compositional than ordinary game theory.

Giving a compositional theory of games is far from straightforward. The main objects of study in this paper are so-called \emph{pregames}, which can be thought of as games relative to a continuation. The use of continuation passing style in game theory is implicit in the literature on selection functions (see \cite{escardo11} for an introduction), and was recently made explicit by the author in an (unpublished) extended abstract \cite{hedges15a}. Pregames are also closely related to the `partially defined games' of \cite{oliva14}. Broadly speaking, when a player computes which move to play, she does so with knowledge about how that move is going to be used by the environment (consisting of the rules of the game and the other players) to compute an outcome. If we take this intuition seriously and relativise an \emph{entire} game to a continuation, including the equilibrium condition, it becomes possible to develop a compositional theory of games.

The use of string diagrams as an internal language for monoidal categories has been developed in quantum information theory \cite{baez10} and bialgebra \cite{fauser12}, and also applied in linguistics \cite{coecke13}. A string diagram denotes a morphism of a monoidal category, and the connection is made formal by a \emph{coherence theorem} saying that the morphism denoted by a string diagram is invariant under certain topological moves of the diagram. There are many variants of monoidal categories, each with its own associated string language, surveyed in \cite{selinger11}. The language used in this paper is the one for symmetric monoidal categories (section 3.5 of \emph{loc. cit.}), extended with an operation called \emph{teleological unit}, which is unique to game theory and developed in this paper.

Our string diagrams appear to be related to influence diagrams, an existing graphical language applied mostly in decision theory but also in game theory \cite{koller03}. Although the precise relationship is still to be worked out, string diagrams appear to be preferable because their well-understood categorical semantics (symmetric monoidal categories) allows them to be easily generalised beyond ordinary probabilistic choices. 

The notation used in this paper is intentionally reminiscent of the notation used in linear algebra, quantum theory and linguistics. The categories appearing in those areas have additional structure (namely, a compact closed or dagger structure) which gives a `quantum causality' in which information can appear to flow both forwards and backwards in time. Although we do not have this structure, the teleological unit appears very similar to the unit of a compact structure, and in particular satisfies a coherence theorem very similar to naturality  of the unit. The game-theoretic interpretation of this is that we have a limited form of backward-causality due to rational agents reasoning about future values.

\paragraph{Outline}

Section \ref{category-of-pregames} defines pregames, the objects of study of this paper, defines the various categorical operations on them, and introduces the string diagram language. Section \ref{coherence-theorems} proves the coherence theorems necessary for the string language to be well-defined. Section \ref{selection-functions} details the relationship between pregames and selection functions, on which they are based. Finally section \ref{future-directions} gives several directions for applications and theoretical research.

\paragraph{Conventions}

In this paper we work only with pure strategies, to simplify the proofs in section \ref{coherence-theorems}. This amounts to defining our constructions starting from the category of sets, but all of the theory in this paper works for an arbitrary symmetric monoidal category, of which the most obvious to use is the category of stochastic relations \cite{fong12}. Other possibilities, including generalising to premonoidal categories, are discussed in section \ref{future-directions}.

We will equate sets up to natural isomorphism, which simplifies our notation in several places. The unique element of the terminal set $1$ will be denoted $\bullet$, in order to avoid overuse of the symbol $*$ which denotes duality. We also have deleting and copying functions $\bang : X \to 1$, $\Delta : X \to X \times X$.

A relation $R \subseteq A \times B$ will be denoted $R : A \rel B$, as is usual in the string diagrams literature (for example in \cite{pavlovic09}).

\section{The category of pregames}\label{category-of-pregames}

\begin{definition}
Let $X$, $Y$, $R$ and $S$ be sets. A \emph{pregame} $\G : X \otimes S^* \to Y \otimes R^*$ consists of the following data:
\begin{itemize}
\item A set $\Sigma_\G$ of \emph{strategy profiles}
\item A \emph{play} function $\P_\G : \Sigma_\G \to Y^X$
\item A \emph{coplay} function $\C_\G : \Sigma_\G \to S^{X \times R}$
\item An \emph{individual rationality relation}\footnote{The term `individually rational' is used in economics, in particular in mechanism design, where it is an example of a rationality condition that is both `individual' or `local' (applies to each player individually rather than a group), and is qualitative rather than quantitative.} $\E_\G : \Sigma_\G \rel X \times R^Y$
\end{itemize}
\end{definition}

Formally we will have a category whose objects are pairs of sets, and $\G$ will be a morphism $(X, S) \to (Y, R)$. We will suggestively write this as $\G : X \otimes S^* \to Y \otimes R^*$, as though we had a compact category. Since we do not, this is only syntactic shorthand. We extend this notation in obvious ways, for example writing $A \otimes B^* \otimes C^* \otimes D$ for $(A \times D, B \times C)$. We will draw $\G$ using string diagram notation as
\[\begin{pspicture}(0,0)(15,4)
\psset{ArrowInside=->,arrowscale=2}
\pnode(7,4){Q}
\pnode(9,4){A2}
\rput(8,2){\ovalnode{C}{$\mathcal{G}$}}
\pnode(7,0){P}
\pnode(9,0){C2}
\nccurve[angleA=135,angleB=-90]{C}{Q} \naput{$S$}
\nccurve[angleA=90,angleB=-135]{P}{C} \naput{$R$}
\nccurve[angleA=-90,angleB=45]{A2}{C} \naput{$X$}
\nccurve[angleA=-45, angleB=90]{C}{C2} \naput{$Y$}
\end{pspicture}\]

The most interesting part of the definition of a game is the type $X \times R^Y$. A pair $(x, k) : X \times R^Y$ will be called a \emph{context}, where $x$ is the \emph{history}\footnote{The generalisation beyond the category of sets is not completely obvious here: for a game with side-effects modelled by a monad $M$ on a locally small category the set of contexts should be $\hom (1, M X) \times \hom (Y, M R)$. That is, the `history' is not a concrete history, but an object representing the computation of a history. For example if we have mixed strategies then we need a probability distribution over possible histories.} and $k$ is the \emph{continuation}. The continuation represents `closing the loop' between $Y$ and $R$, with arbitrary future computation done by the environment collapsed into a single function. The composition and tensor product of pregames operate by taking an existing continuation and extending it both forwards and backwards in time, as is usual when programming with delimited continuations.

Also notice the curious similarity between the types of the play/coplay functions and the witnesses used in the Dialectica interpretation of intuitionistic implication \cite{avigad98}. Because of this, there is a strong resemblance between parts of the proof of theorem \ref{theorem-category} and parts of the soundness proof for the Dialectica interpretation.

\begin{definition}
A \emph{closed pregame} is a pregame of the form $\G : S^* \to Y$, with $X = R = 1$. In this case we have $X \times R^Y = 1$, and so $\E_\G$ is a unary relation on strategy profiles $\Sigma_\G$. If $\sigma : \Sigma_\G$ is a strategy profile for a closed pregame $\G$ we will call $\sigma$ an \emph{equilibrium} of $\G$ iff $\sigma \E_\G \blob$.
\end{definition}

In a string diagram, the identity object of a category is denoted by empty space. Therefore a closed pregame is of the form
\[ \begin{pspicture}(0,0)(15,4)
\psset{ArrowInside=->,arrowscale=2}
\pnode(8,4){A1}
\rput(8,2){\ovalnode{C1}{$\mathcal{G}$}}
\pnode(8,0){B1}
\nccurve[angleA=90,angleB=-90]{C1}{A1} \nbput{$S$}
\nccurve[angleA=-90,angleB=90]{C1}{B1} \naput{$Y$}
\end{pspicture} \]

We will build pregames, using categorical composition and tensor product, from three atomic components: \emph{decisions}, \emph{computations} and the \emph{teleological unit}\footnote{\emph{Teleology} is the form of causality due to agents `striving' to reach some future aim. The use of this name here is intended to highlight that this operation alone accounts for a large difference between physics and game theory.}. Broadly, the first two divide a game into \emph{players} (or agents) and \emph{rules}. The teleological unit is used to model the backward-causality present in game theory caused by players reasoning about future events.

\begin{definition}
A \emph{decision} is a pregame $\G : X \to Y \otimes R^*$ satisfying $\Sigma_\G = Y^X$ and $\P_\G \sigma = \sigma$.
\end{definition}

Graphically, a decision is of the form
\[ \begin{pspicture}(0,0)(15,4)
\psset{ArrowInside=->,arrowscale=2}
\pnode(8,4){A1}
\rput(8,2){\ovalnode{C1}{$\mathcal{G}$}}
\pnode(7,0){B1}
\pnode(9,0){B2}
\nccurve[angleA=-90,angleB=90]{A1}{C1} \naput{$X$}
\nccurve[angleA=90,angleB=-135]{B1}{C1} \naput{$R$}
\nccurve[angleA=-45, angleB=90]{C1}{B2} \naput{$Y$}
\end{pspicture} \]

To specify a decision is to specify an individual rationality relation $\E_\G : Y^X \rel X \times R^Y$. As for selection functions, this relation can encode a large amount of information about a player's preferences and personality. The intuition is that the decision $\G$ represents an agent who observes the value at $X$ and makes a choice at $Y$, while reasoning forwards in time about the outcome at $R$. The relation $\sigma \E_\G (x, k)$ should hold iff the agent has no incentive to unilaterally deviate from the strategy $\sigma$, in the context in which the history is $x$ and the outcome resulting from the choice $y$ is $ky$. For example, for a classical utility-maximising agent we have $R = \mathbb R$ and
\[ \sigma \E_\G (x, k) \iff k (\sigma x) = \max_{y : Y} k y \]
More generally, in section \ref{selection-functions} we show how to design such relations using multivalued selection functions or quantifiers.

\begin{definition}
Let $f : X \to Y$ be a set-theoretic function. We can view $f$ as a pregame either covariantly as $f : X \to Y$, or contravariantly as $f^* : Y^* \to X^*$. In both cases we have $\Sigma = 1$, and define $\blob \E (x, k)$ to be true for every $x$ and $k$. In the covariant case we set $\P_f \blob = f$, and in the contravariant case we set $\C_{f^*} \blob = f$ (or, adding an explicit type isomorphism for readability, $\C_{f^*} \blob (\blob, x) = f x$).
\end{definition}

Covariant and contravariant computations are respectively drawn as
\[ \begin{pspicture}(0,0)(15,4)
\psset{ArrowInside=->,arrowscale=2}
\pnode(5,4){A1}
\rput(5,2){\ovalnode{C1}{$f$}}
\pnode(5,0){B1}
\nccurve[angleA=-90,angleB=90]{A1}{C1} \nbput{$X$}
\nccurve[angleA=-90,angleB=90]{C1}{B1} \nbput{$Y$}
\pnode(9,4){A1}
\rput(9,2){\ovalnode{C1}{$f$}}
\pnode(9,0){B1}
\nccurve[angleA=90,angleB=-90]{C1}{A1} \naput{$Y$}
\nccurve[angleA=90,angleB=-90]{B1}{C1} \naput{$X$}

\end{pspicture}  \]

A particularly important example of a computation is the copying computation $\Delta_X : X \to X \otimes X$. This allows us to use a value more than once, and will be drawn
\[ \begin{pspicture}(0,0)(15,4)
\psset{ArrowInside=->,arrowscale=2}
\pnode(7,4){O}
\pnode(7,2){C}
\pnode(5,0){A1}
\pnode(9,0){A2}
\nccurve[angleA=-90,angleB=90]{O}{C} \nbput{$X$}
\nccurve[angleA=180,angleB=90]{C}{A1}
\nccurve[angleA=0,angleB=90]{C}{A2}
\end{pspicture} \]

\begin{definition}
Let $\G : X \otimes T^* \to Y \otimes S^*$ and $\H : Y \otimes S^* \to Z \to R^*$ be pregames. The composition $\H \circ \G : X \otimes T^* \to Z \otimes R^*$ is defined by
\begin{itemize}
\item $\Sigma_{\H \circ \G} = \Sigma_\G \times \Sigma_\H$
\item $\P_{\H \circ \G} (\sigma_1, \sigma_2) = \P_\H \sigma_2 \circ \P_\G \sigma_1$
\item $\C_{\H \circ \G} (\sigma_1, \sigma_2) (x, r) = \C_\G \sigma_1 (x, \C_\H \sigma_2 (\P_\G \sigma_1 x, r))$
\item $(\sigma_1, \sigma_2) \E_{\H \circ \G} (x, k)$ iff $\sigma_1 \E_\G (x, k')$ and $\sigma_2 \E_\H (\P_\G \sigma_1 x, k)$, where
\[ k' y = \C_\H \sigma_2 (y, k (\P_\H \sigma_2 y)) \]
\end{itemize}
\end{definition}

\begin{definition}
Let $\G : X_1 \otimes R_1^* \to Y_1 \otimes S_1^*$ and $\H : X_2 \otimes S_2^* \to Y_2 \otimes R_2^*$ be pregames. The monoidal product
\[ \G \otimes \H : X_1 \otimes S_1^* \otimes X_2 \otimes S_2^* \to Y_1 \otimes R_1^* \otimes Y_2 \otimes R_2^* \]
is defined by
\begin{itemize}
\item $\Sigma_{\G \otimes \H} = \Sigma_\G \times \Sigma_\H$
\item $\P_{\G \otimes \H} (\sigma_1, \sigma_2) (x_1, x_2) = (\P_\G \sigma_1 x_1, \P_\H \sigma_2 x_2)$
\item $\C_{\G \otimes \H} (\sigma_1, \sigma_2) ((x_1, x_2), (r_1, r_2)) = (\C_\G \sigma_1 (x_1, r_1), \C_\H \sigma_2 (x_2, r_2))$
\item $(\sigma_1, \sigma_2) \E_{\G \otimes \H} ((x_1, x_2), k)$ iff $\sigma_1 \E_\G (x_1, k_1)$ and $\sigma_2 \E_\H (x_2, k_2)$ where
\[ k_1 y_1 = (\pi_1 \circ k) (y_1, \P_\H \sigma_2 x_2) \]
\[ k_2 y_2 = (\pi_2 \circ k) (\P_\G \sigma_1 x_1, y_2) \]
\end{itemize}
\end{definition}

Composition and tensor product are graphically represented, as usual, by end-to-end and side-by-side juxtaposition.

\begin{definition}
\emph{Teleological unit} is the pregame $\tau_X : X \otimes X^* \to 1$ given by $\Sigma_{\tau_X} = 1$, $\C_{\tau_X} \bullet (x, \bullet) = x$ and with $\bullet \E_{\tau_X} (x, \bullet)$ holding for all $x$.
\end{definition}

We graphically represent the teleological unit by a \emph{cup}
\[ \begin{pspicture}(0,0)(15,2)
\psset{ArrowInside=->,arrowscale=2}
\pnode(5,2){B1}
\pnode(7,0){C1}
\pnode(9,2){C2}
\nccurve[angleA=-90,angleB=0]{C2}{C1}
\nccurve[angleA=180,angleB=-90]{C1}{B1}
\end{pspicture} \]
This notation is usually used for the unit of a compact structure, which comes with a dual \emph{cap} representing the counit, satisfying the intuitive `yanking equation'. Although we do not have a cap, the notation is justified by theorem \ref{teleological-closure-naturality}.

\section{Coherence theorems}\label{coherence-theorems}

\begin{theorem}\label{theorem-category}
There is a category $\mathbf{Pregame}$ whose objects are pairs of sets $(X, R)$, written $X \otimes R^*$, and whose morphisms are pregames. The identity morphism on $X \otimes R^*$ is $\id_{X \otimes R^*} = \id_X \otimes \id_R^*$, and the composition is the one given in the previous section.
\end{theorem}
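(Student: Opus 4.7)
The plan is to verify, for composable pregames $\G$, $\H$, $\mathcal{K}$, the three category axioms---well-definedness of composition, associativity, and left/right unitality---by unpacking each of the four components ($\Sigma$, $\P$, $\C$, $\E$) separately. Since each component ultimately lives in $\mathbf{Set}$, the proof reduces to a collection of elementary equalities of functions and biconditionals of predicates, aided by the convention of equating sets up to natural isomorphism.

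Well-definedness of composition is immediate from the typing of each clause, and $\id_X \otimes \id_R^*$ unfolds to a pregame with $\Sigma = 1$, $\P \blob = \id_X$, $\C \blob (x,r) = r$ (with the obvious identifications), and $\E$ identically true. For associativity of $\mathcal{K} \circ (\H \circ \G)$ versus $(\mathcal{K} \circ \H) \circ \G$, the $\Sigma$-components agree up to the natural associator $(A \times B) \times C \cong A \times (B \times C)$; the $\P$-components agree by associativity of function composition in $\mathbf{Set}$; and the $\C$-components of either bracketing both simplify, by unfolding the composition formula for $\C$ in each factor, to the triply-nested expression
\[ \C_\G \sigma_1 \bigl(x,\, \C_\H \sigma_2 \bigl(\P_\G \sigma_1 x,\, \C_\mathcal{K} \sigma_3 (\P_\H \sigma_2 (\P_\G \sigma_1 x), r)\bigr)\bigr). \]

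The substantive step is the equality of $\E$-relations for associativity, where the continuation-passing structure must be tracked carefully. Unfolding $((\sigma_1,\sigma_2),\sigma_3) \E_{(\mathcal{K} \circ \H) \circ \G} (x,k)$ produces a clause $\sigma_1 \E_\G (x, k')$ in which $k'$ is built from $\C_{\mathcal{K} \circ \H}$ and $\P_{\mathcal{K} \circ \H}$; expanding those by their own composition formulas rewrites $k'$ in a form that matches exactly the continuation produced by unfolding $(\sigma_1,(\sigma_2,\sigma_3)) \E_{\mathcal{K} \circ (\H \circ \G)} (x,k)$ from the other side. Both sides reduce to the triple conjunction
\[ \sigma_1 \E_\G (x, k'), \quad \sigma_2 \E_\H (\P_\G \sigma_1 x, k''), \quad \sigma_3 \E_\mathcal{K} (\P_\H \sigma_2 (\P_\G \sigma_1 x), k), \]
with $k'' y = \C_\mathcal{K} \sigma_3 (y, k(\P_\mathcal{K} \sigma_3 y))$ and $k' y = \C_\H \sigma_2 (y, k''(\P_\H \sigma_2 y))$. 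The main obstacle is purely combinatorial: keeping these nested applications of $\C$ and $\P$ aligned through the rewrites, a bookkeeping task whose bureaucratic flavour is exactly what makes the Dialectica-style analogy mentioned above apt. The unit laws for $\E$ follow similarly and with less pain: substituting the identity pregame makes the extended continuation $k'$ equal to the original $k$ (since the identity's coplay projects out $r$ and its play is $\id$), and the identity's $\E$-clause is always satisfied, so the compound relation collapses to the one for $\G$ alone.
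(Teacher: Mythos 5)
Your proposal is correct and follows essentially the same route as the paper: componentwise verification ($\Sigma$ up to the associator/unitors, $\P$ by associativity of function composition, $\C$ by unfolding to the triply-nested expression, and $\E$ by tracking the continuations, which you identify exactly as the paper's $k_1$, $k_2$, $k_3$), with the unit laws collapsing because the identity's coplay projects out $r$ and its rationality relation is total. Your formula $k' y = \C_\H \sigma_2 (y, k''(\P_\H \sigma_2 y))$ is in fact the correct form of what the paper writes (with an apparent typo) as $k_1 y = \C_\H \sigma_2 (y, k_2 (\P_\H \sigma_2 x))$.
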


\begin{proof}
We begin by noting that the identity pregame is explicitly given by
\begin{itemize}
\item $\Sigma_{\id_{X \otimes R^*}} = 1$
\item $\P_{\id_{X \otimes R^*}} \blob = \id_X$
\item $\C_{\id_{X \otimes R^*}} \blob (x, r) = r$
\item $\blob \E_{\id_{X \otimes R^*}} k$ for every $k$
\end{itemize}
(Note that $\id_{X \otimes R^*}$ is really an endomorphism of $(X \times 1, 1 \times R)$, but we are equating sets up to natural isomorphism.)

\paragraph{Left identity}

Let $\G : X \otimes S^* \to Y \otimes R^*$. We prove that $\id_{Y \otimes R^*} \circ \G = \G$. We have
\begin{itemize}
\item $\Sigma_{\id_{Y \otimes R^*} \circ \G} = \Sigma_{\id_{Y \otimes R^*}} \times \Sigma_\G = 1 \times \Sigma_\G = \Sigma_\G$
\item $\P_{\id_{Y \otimes R^*} \circ \G} \sigma = \P_{\id_{Y \otimes R^*}} \blob \circ \P_\G \sigma = \id_Y \circ \P_\G \sigma = \P_\G \sigma$
\item $\C_{\id_{Y \otimes R^*} \circ \G} \sigma (x, r) = \C_\G \sigma (x, \C_{\id_{Y \otimes R^*}} \bullet (\P_\G \sigma x, r)) = \C_\G \sigma (x, r)$
\item $\sigma \E_{\id_{Y \otimes R^*} \circ \G} (x, k) \iff \sigma \E_\G (x, k') \wedge \sigma \E_{\id_{Y \otimes R^*}} (\P_\G \sigma x, k) \iff \sigma \E_\G (x, k')$ where
\[ k' y = \C_{\id_{Y \otimes R^*}} \bullet (y, k (\P_{\id_{Y \otimes R^*}} \bullet y)) = k y \]
\end{itemize}

\paragraph{Right identity}

We prove that $\G \circ \id_{X \otimes S^*} = \G$. We have
\begin{itemize}
\item $\Sigma_{\G \circ \id_{X \otimes S^*}} = \Sigma_\G \times \Sigma_{\id_{X \otimes S^*}} = \Sigma_\G \times 1 = \Sigma_\G$
\item $\P_{\G \circ \id_{Y \otimes R^*}} \sigma = \P_\G \sigma \circ \P_{\id_{X \otimes S^*}} \blob = \P_\G \sigma \circ \id_X = \P_\G \sigma$
\item $\C_{\G \circ \id_{Y \otimes R^*}} \sigma (x, r) = \C_{\id_{X \otimes S^*}} \bullet (x, \C_\G \sigma (\P_{\id_{X \otimes S^*}} \bullet x, r)) = \C_\G \sigma (x, r)$
\item $\sigma \E_{\G \circ \id_{Y \otimes R^*}} (x, k) \iff \bullet \E_{\id_{X \otimes S^*}} (x, k') \wedge \sigma \E_\G (\P_{\id_{X \otimes S^*}} \bullet x, k) \iff \sigma \E_\G (x, k)$
\end{itemize}

\paragraph{Associativity}

Let $\G : X \otimes U^* \to Y \otimes T^*$, $\H : Y \otimes T^* \to Z \otimes S^*$ and $\I : Z \otimes S^* \to W \otimes R^*$. We have
\[ \Sigma_{(\I \circ \H) \circ \G} = \Sigma_\G \times \Sigma_{\I \circ \H} = \Sigma_\G \times \Sigma_\H \times \Sigma_\I = \Sigma_{\H \circ \G} \times \Sigma_\I = \Sigma_{\I \circ (\H \circ \G)} \]
For the play function we have
\begin{align*}
\P_{(\I \circ \H) \circ \G} (\sigma_1, \sigma_2, \sigma_3) &= \P_{\I \circ \H} (\sigma_2, \sigma_3) \circ \P_\G \sigma_1 \\
&= \P_\I \sigma_3 \circ \P_\H \sigma_2 \circ \P_\G \sigma_1 \\
&= \P_\I \sigma_3 \circ \P_{\H \circ \G} (\sigma_1, \sigma_2) \\
&= \P_{\I \circ (\H \circ \G)} (\sigma_1, \sigma_2, \sigma_3)
\end{align*}
and for the coplay function have
\begin{align*}
\C_{(\I \circ \H) \circ \G} (\sigma_1, \sigma_2, \sigma_3) (x, r) &= \C_\G \sigma_1 (x, \C_{\I \circ \H} (\sigma_2, \sigma_3) (\P_\G \sigma_1 x, r)) \\
&= \C_\G \sigma_1 (x, \C_\H \sigma_2 (\P_\G \sigma_1 x, \C_\I \sigma_3 (\P_\H \sigma_2 (\P_\G \sigma_1 x), r))) \\
&= \C_\G \sigma_1 (x, \C_\H \sigma_2 (\P_\G \sigma_1 x, \C_\I \sigma_3 (\P_{\H \circ \G} (\sigma_1, \sigma_2) x, r))) \\
&= \C_{\H \circ \G} (\sigma_1, \sigma_2) (x, \C_\I \sigma_3 (\P_{\H \circ \G} (\sigma_1, \sigma_2) x, r)) \\
&= \C_{\I \circ (\H \circ \G)} (\sigma_1, \sigma_2, \sigma_3) (x, r)
\end{align*}
For the equilibrium condition we have
\begin{align*}
&(\sigma_1, \sigma_2, \sigma_3) \E_{(\I \circ \H) \circ \G} (x, k_3) \\
\iff &\sigma_1 \E_\G (x, k_1) \wedge (\sigma_2, \sigma_3) \E_{\I \circ \H} (\P_\G \sigma_1 x, k_3) \\
\iff &\sigma_1 \E_\G (x, k_1) \wedge \sigma_2 \E_\H (\P_\G \sigma_1 x, k_2) \wedge \sigma_3 \E_\I (\P_\H \sigma_2 (\P_\G \sigma_1 x), k_3) \\
\iff &(\sigma_1, \sigma_2) \E_{\G \circ \H} (x, k_2) \wedge \sigma_3 \E_\I (\P_{\H \circ \G} (\sigma_1, \sigma_2) x, k_3) \\
\iff &(\sigma_1, \sigma_2, \sigma_3) \E_{\I \circ (\H \circ \G)} (x, k_3)
\end{align*}
where
\begin{align*}
k_2 z &= \C_\I (z, k_3 (\P_\I \sigma_3 z)) \\
k_1 y &= \C_{\I \circ \H} (\sigma_2, \sigma_3) (y, k_3 (\P_{\I \circ \H} (\sigma_2, \sigma_3) x)) \\
&= \C_{\I \circ \H} (\sigma_2, \sigma_3) (y, k_3 (\P_\I \sigma_3 (\P_\H \sigma_2 x))) \\
&= \C_\H \sigma_2 (y, \C_\I \sigma_3 (\P_\H \sigma_2 y, k_3 (\P_\I \sigma_3 (\P_\H \sigma_2 x)))) \\
&= \C_\H \sigma_2 (y, k_2 (\P_\H \sigma_2 x))
\end{align*}
\end{proof}

\begin{theorem}
The category $\mathbf{Pregame}$ is symmetric monoidal, with unit $1 \otimes 1^*$, and the tensor product given in the previous section.
\end{theorem}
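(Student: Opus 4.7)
The plan is to lift the symmetric monoidal structure from $\mathbf{Set} \times \mathbf{Set}^{\mathrm{op}}$, where the first coordinate tracks the forward (play) type and the second the backward (coplay) type. Structural isomorphisms in $\mathbf{Pregame}$ will be defined as tensor products of a covariant computation (for the forward part) with a contravariant computation (for the backward part), each built from the corresponding set-theoretic structural isomorphism. The work then decomposes into: (a) verifying $\otimes$ is a bifunctor (preservation of identities and the interchange law); (b) defining the associator, unitors, and braiding, and checking their naturality; (c) verifying the triangle, pentagon, and hexagon axioms.

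For bifunctoriality, preservation of identities is immediate from the explicit form of $\id_{X \otimes R^*}$ recorded at the start of Theorem \ref{theorem-category}, because all four data $(\Sigma, \P, \C, \E)$ distribute componentwise over $\otimes$. The interchange law $(\G_1 \otimes \G_2) \circ (\H_1 \otimes \H_2) = (\G_1 \circ \H_1) \otimes (\G_2 \circ \H_2)$ is then checked coordinatewise: the strategy profile sets are canonically a fourfold product on both sides, the play functions agree by the interchange law in $\mathbf{Set}$, and the coplay functions agree by combining that with the componentwise definition of $\C_{\G \otimes \H}$. For the structural isomorphisms, I would take the associator on $(X_i, R_i)$ to be $\alpha^{\mathbf{Set}}_{X_1,X_2,X_3} \otimes (\alpha^{\mathbf{Set}}_{R_1,R_2,R_3})^*$, and similarly for the unitors $\rho, \lambda$ and the braiding $\gamma$. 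These pregames have $\Sigma = 1$ and trivially true rationality relation by the definition of a computation, so naturality in each slot reduces — after unpacking the composition formula — to naturality of the corresponding set-theoretic isomorphism.

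The coherence axioms then reduce to the corresponding axioms in $\mathbf{Set}$ on the play side and in $\mathbf{Set}^{\mathrm{op}}$ on the coplay side, since on both sides of each coherence diagram $\Sigma$ is trivially $1$, $\E$ is trivially true, and $\P, \C$ are exactly the composites of set-theoretic structural isomorphisms whose equality is the $\mathbf{Set}$ coherence. I expect the main obstacle to be the equilibrium clause of the interchange law: the continuations $k_1, k_2$ passed to each component in $\G_1 \otimes \G_2$ are built from $k$ by fixing one coordinate via the other component's play function, and when this tensor sits above $\H_1 \otimes \H_2$ these continuations are further composed with the $\C_{\H_i}$'s. One must carefully check that the two resulting pairs of restricted continuations agree with those obtained by first composing in each slot and then tensoring; this ultimately follows from currying and projection identities in $\mathbf{Set}$, but the bookkeeping is slightly more intricate than (and parallel to) the equilibrium calculation in the associativity part of Theorem \ref{theorem-category}.
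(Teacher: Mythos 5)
Your proposal is correct, but it is considerably more thorough than, and structurally different from, the paper's own argument. The paper's proof is two sentences long: it invokes the standing convention of equating sets up to natural isomorphism, takes $\lambda$, $\rho$, $\alpha$ and the braiding all to be identity pregames, and declares the coherence diagrams to commute automatically. In particular the paper never verifies that $\otimes$ is a bifunctor --- it does not state or check the interchange law $(\G_1 \otimes \G_2) \circ (\H_1 \otimes \H_2) = (\G_1 \circ \H_1) \otimes (\G_2 \circ \H_2)$, which is the one genuinely non-trivial obligation and exactly the step you single out as the main obstacle. Your instinct there is right: the delicate point is that the restricted continuations (fix one coordinate via the other component's play function, then precompose with the relevant coplay) agree whether one tensors first or composes first, and a direct unwinding in the style of the associativity calculation of Theorem~\ref{theorem-category} confirms they do. Your explicit construction of the structural isomorphisms as $\alpha^{\mathbf{Set}} \otimes (\alpha^{\mathbf{Set}})^*$ etc.\ (computations with $\Sigma = 1$ and trivially true rationality relation) is also sound, and buys you a proof that does not lean on the ``up to natural isomorphism'' convention; the cost is the extra bookkeeping of checking naturality of these maps in all four components, including the equilibrium clause, which the paper's shortcut avoids entirely. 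In short: your route is the honest one, the paper's is the quick one, and the only content the paper actually supplies that you must still carry out in full is nothing --- everything substantive in this theorem lives in the parts you identified.
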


\begin{proof}
We must prove the existence in $\mathbf{Pregame}$ of natural isomorphisms $\lambda_X : I \otimes X \to X$, $\rho_X : X \otimes I \to X$, $\alpha_{X,Y,Z} : (X \otimes Y) \otimes Z \to X \otimes (Y \otimes Z)$ and $\sigma_{X,Y} : X \otimes Y \to Y \otimes X$ making certain diagrams commute. Since we are treating sets up to natural isomorphism we can take each of these to be the identity, and the diagrams all commute automatically.
\end{proof}

\begin{theorem}\label{teleological-closure-naturality}
If $f : X \to Y$ is a computation then $\tau_Y \circ (f \otimes \id_Y^*) = \tau_X \circ (\id_X \otimes f^*)$. In diagrams,
\[ \begin{pspicture}(0,0)(15,4)
\psset{ArrowInside=->,arrowscale=2}
\pnode(1,4){A1}
\pnode(1,2){B1}
\pnode(3,0){C1}
\rput(5,2){\ovalnode{C2}{$f$}}
\pnode(5,4){A2}
\nccurve[angleA=-90,angleB=90]{A2}{C2}
\nccurve[angleA=-90,angleB=0]{C2}{C1}
\nccurve[angleA=180,angleB=-90]{C1}{B1}
\nccurve[angleA=90,angleB=-90]{B1}{A1}

\rput(7,2){$=$}

\pnode(9,4){A1}
\rput(9,2){\ovalnode{B1}{$f$}}
\pnode(11,0){C1}
\pnode(13,2){C2}
\pnode(13,4){A2}
\nccurve[angleA=-90,angleB=90]{A2}{C2}
\nccurve[angleA=-90,angleB=0]{C2}{C1}
\nccurve[angleA=180,angleB=-90]{C1}{B1}
\nccurve[angleA=90,angleB=-90]{B1}{A1}
\end{pspicture} \]
\end{theorem}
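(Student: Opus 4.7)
The plan is to unfold both sides of the equation using the definitions of composition, tensor product, and the atomic pregames, and check that all four pieces of data (strategy set, play, coplay, individual rationality) agree. Both pregames are morphisms $X \otimes Y^* \to 1$, so the play function has codomain $1^X = 1$ and is forced to be trivial; the strategy profile sets are on both sides the product $1 \times 1 = 1$; and the individual rationality relation on each side decomposes via the composition and tensor formulas into a conjunction of individual rationality conditions for $f$, $\id$, $f^*$ and the teleological unit, each of which holds universally. So three of the four pieces reduce immediately to tautologies, and the only real work is in the coplay function.

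For the coplay, I would first compute $\C_{f \otimes \id_Y^*} \bullet$ and $\C_{\id_X \otimes f^*} \bullet$ separately from the tensor product formula, making the canonical identifications $X \times 1 \cong X$ and $1 \times Y \cong Y$ explicit so the signatures line up: the first sends $(x, y) \in X \times Y$ to $y$ (since $\C_f$ lands in $1$ and $\C_{\id_Y^*}$ is the identity on $Y$), and the second sends $(x, x') \in X \times X$ to $f x'$ (since $\C_{\id_X}$ lands in $1$ and $\C_{f^*}$ is $f$). The play function on the tensors is likewise easy: $\P_{f \otimes \id_Y^*}\bullet = f$ and $\P_{\id_X \otimes f^*}\bullet = \id_X$ after the same identifications.

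Next I apply the composition formula with $\tau_Y$ and $\tau_X$, using that $\C_{\tau_Z}\bullet(z,\bullet)=z$. On the LHS this gives
\[ \C_{\tau_Y \circ (f \otimes \id_Y^*)} \bullet (x, \bullet) = \C_{f \otimes \id_Y^*}\bullet\bigl(x,\, \C_{\tau_Y}\bullet(fx,\bullet)\bigr) = \C_{f \otimes \id_Y^*}\bullet(x, fx) = fx, \]
and on the RHS it gives
\[ \C_{\tau_X \circ (\id_X \otimes f^*)} \bullet (x, \bullet) = \C_{\id_X \otimes f^*}\bullet\bigl(x,\, \C_{\tau_X}\bullet(x,\bullet)\bigr) = \C_{\id_X \otimes f^*}\bullet(x,x) = fx. \]
These agree, so the coplay functions coincide.

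The only real obstacle is bookkeeping: the pregames $\id_X$, $f$, $\id_Y^*$, $f^*$ all have some of $X_i, Y_i, R_i, S_i$ equal to $1$, and one has to use the natural isomorphisms $A \times 1 \cong A$ consistently when applying the tensor product and composition formulas. Once those identifications are fixed the verification is mechanical; no subtle step is involved beyond remembering that $\C_{f^*}$ is what carries the function $f$ in the contravariant case while $\P_{f^*}$ is trivial, and dually for the covariant $f$. I would organise the writeup as three short bullet items (profiles/play/rationality, dispatched in one line each) followed by the two displayed coplay computations above.
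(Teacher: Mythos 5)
Your proposal is correct and follows essentially the same route as the paper's own proof: both first record the play and coplay of the two tensor products $f \otimes \id_Y^*$ and $\id_X \otimes f^*$, dispatch the strategy sets, play functions and rationality relations as trivial, and then verify that both coplay functions evaluate to $f x$ by unfolding the composition formula with the teleological unit. The only cosmetic difference is that you compute the two coplay expressions separately and compare, while the paper writes a single chain of equalities from one side to the other.
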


\begin{proof}
Firstly note that we have
\begin{align*}
\P_{f \otimes \id_Y^*} \bullet x &= f x & \C_{f \otimes \id_Y^*} \bullet (x, y) &= y \\
\P_{\id_X \otimes f^*} \bullet x &= x & \C_{\id_X \otimes f^*} \bullet (x, x') &= f x'
\end{align*}
We have $\Sigma_{\tau_Y \circ (f \otimes \id_Y^*)} = 1 = \Sigma_{\tau_X \circ (\id_X \otimes f^*)}$. For the play function,
\[ \P_{\tau_Y \circ (f \otimes \id_Y^*)} \bullet x = \P_{\tau_Y} \bullet (\P_{f \otimes \id_Y^*} \bullet x) = \bullet = \P_{\tau_X} \bullet (\P_{\id_X \otimes f^*} \bullet x) = \P_{\tau_X \circ (\id_X \otimes f^*)} \bullet x \]
and for the coplay function,
\begin{align*}
\C_{\tau_Y \circ (f \otimes \id_Y^*)} \bullet (x, \bullet) &= \C_{f \otimes \id_Y^*} \bullet (x, \C_{\tau_Y} \bullet (\P_{f \otimes \id_Y^*} \bullet x, \bullet)) \\
&= \C_{\tau_Y} \bullet (\P_{f \otimes \id_Y^*} \bullet x, \bullet) \\
&= \P_{f \otimes \id_Y^*} \bullet x \\
&= f x \\
&= f (\P_{\id_X \otimes f^*} \bullet x) \\
&= f (\C_{\tau_X} \bullet (\P_{\id_X \otimes f^*} \bullet x, x)) \\
&= \C_{\id_X \otimes f^*} (x, \C_{\tau_X} \bullet (\P_{\id_X \otimes f^*} \bullet x, \bullet)) \\
&= \C_{\tau_X \circ (\id_X \otimes f^*)} \bullet (x, \bullet)
\end{align*}
Finally for both cases we have $\bullet \E (x, \bullet)$ for all $x$.
\end{proof}

\section{Relationship to selection functions}\label{selection-functions}

In this section we show that pregames subsume certain classes of `higher-order games'. Firstly we consider the `context-dependent games' of \cite{hedges14b}, which provide a large generalisation of simultaneous games with pure strategies. Secondly we consider finite generalised sequential games in the sense of \cite{escardo11}, which provide a large generalisation of extensive-form games of perfect information. In each case we will focus on two-players games for simplicity. We can also easily generalise to mixed strategies, which is described in section \ref{future-directions}.

A two-player context-dependent game is defined in \cite{hedges14b} to consist of the following data:
\begin{itemize}
\item Sets $X$, $Y$ of choices for each player, and $R$ of outcomes
\item Multivalued selection functions $\varepsilon : (X \to R) \to \mathcal P X$, $\delta : (Y \to R) \to \mathcal P Y$
\item An outcome function $q : X \times Y \to R$
\end{itemize}
A strategy profile is simply a pair $(\sigma_1, \sigma_2) : X \times Y$. A strategy profile is called a \emph{selection equilibrium} if
\[ \sigma_1 \in \varepsilon \lambda x . q (x, \sigma_2) \]
\[ \sigma_2 \in \delta \lambda y . q (\sigma_1, y) \]

\begin{theorem}
The selection equilibria of this game are precisely the equilibria of the string diagram
\[ \begin{pspicture}(0,0)(15,4)
\rput(7,1){\ovalnode{U}{$q$}}
\rput(5.5,3){\ovalnode{P1}{$P_1$}}
\rput(8.5,3){\ovalnode{P2}{$P_2$}}
\pnode(7,0.2){Ud}
\nccurve[angleA=-70,angleB=110,ArrowInside=->,arrowscale=2]{P1}{U} \naput{$X$}
\nccurve[angleA=-110,angleB=70,ArrowInside=->,arrowscale=2]{P2}{U} \nbput{$Y$}
\nccurve[angleA=-90,angleB=90]{U}{Ud} 
\nccurve[angleA=180,angleB=-110,ArrowInside=->,arrowscale=2]{Ud}{P1} \naput{$R$}
\nccurve[angleA=0,angleB=-70,ArrowInside=->,arrowscale=2]{Ud}{P2} \nbput{$R$}
\end{pspicture} \]
where
\[ \sigma_1 \E_{P_1} (\bullet, k) \iff \sigma_1 \in \varepsilon k \]
\[ \sigma_2 \E_{P_2} (\bullet, k) \iff \sigma_2 \in \delta k \]
\end{theorem}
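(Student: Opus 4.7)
The plan is to decompose the diagram into an explicit composition of atomic morphisms and apply the composition and tensor definitions iteratively to read off the equilibrium condition. Write the diagram as $\G_{\mathrm{down}} \circ (P_1 \otimes P_2)$, where $P_1 \otimes P_2 : 1 \to X \otimes R^* \otimes Y \otimes R^*$ contains the two decision pregames and $\G_{\mathrm{down}} : X \otimes R^* \otimes Y \otimes R^* \to 1$ is assembled from $q$, a copy $\Delta_R$, two teleological units $\tau_R$, and whatever symmetry isomorphisms are needed to route the wires. Every component of $\G_{\mathrm{down}}$ is a computation or a teleological unit, so by the composition and tensor formulas $\Sigma_{\G_{\mathrm{down}}} = 1$ and its individual rationality relation is identically true; in particular $\Sigma = X \times Y$ for the whole closed pregame. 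Unfolding the composition definition once, a pair $(\sigma_1, \sigma_2) \in X \times Y$ is therefore an equilibrium of the diagram iff $(\sigma_1, \sigma_2) \E_{P_1 \otimes P_2} (\bullet, k')$ for the continuation $k' : X \times Y \to R \times R$ given by $k'(x,y) = \C_{\G_{\mathrm{down}}} \bullet ((x,y), \bullet)$.

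The main computation is then to show that $k'(x,y) = (q(x,y), q(x,y))$. This is a straightforward but finicky unfolding: the play of $q$ produces $q(x,y) \in R$, the copy duplicates it to $(q(x,y), q(x,y))$, and each teleological unit, via its coplay $\C_{\tau_R} \bullet (r, \bullet) = r$, routes its forward input back along the adjacent $R^*$ wire. Applying the tensor-product equilibrium formula to $P_1 \otimes P_2$ and using $\P_{P_i} \sigma_i = \sigma_i$ then gives the induced continuations $k_1(x) = \pi_1 k'(x, \sigma_2) = q(x, \sigma_2)$ for $P_1$ and $k_2(y) = \pi_2 k'(\sigma_1, y) = q(\sigma_1, y)$ for $P_2$. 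Substituting the hypothesised equivalences $\sigma_1 \E_{P_1} (\bullet, k) \iff \sigma_1 \in \varepsilon k$ and $\sigma_2 \E_{P_2} (\bullet, k) \iff \sigma_2 \in \delta k$ yields exactly the selection-equilibrium conditions $\sigma_1 \in \varepsilon(\lambda x.\, q(x, \sigma_2))$ and $\sigma_2 \in \delta(\lambda y.\, q(\sigma_1, y))$, so the two notions of equilibrium coincide.

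The main obstacle is the bookkeeping in the middle step: the tensor product splits an incoming joint continuation via projections, while composition nests coplay functions and requires plugging play functions back into the correct wires, and the wire reorderings needed to match $X \otimes R^* \otimes Y \otimes R^*$ against the inputs of $q$ and the cups have to be threaded through consistently. The combination of cup plus copy is exactly what turns the coplay of $\G_{\mathrm{down}}$ into the ``diagonal'' $r \mapsto (r,r)$ on the backward side, which is precisely what causes the other player's strategy to be substituted into each player's induced continuation. A cleaner route would be to first prove a short lemma, in the spirit of Theorem~\ref{teleological-closure-naturality}, that for any computation $f : A \to R$ the composite $\tau_R \circ (f \otimes \id_{R^*})$ has coplay $(a, \bullet) \mapsto f(a)$; applying it separately to the two branches produced by $\Delta_R$ (once with $y = \sigma_2$ fixed, once with $x = \sigma_1$ fixed) would deliver the two continuations conceptually and avoid most of the manual wire-tracing.
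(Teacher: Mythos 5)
Your proposal is correct and follows essentially the same route as the paper: decompose the diagram as the players' tensor $P_1 \otimes P_2$ followed by a trivial-strategy ``closing'' morphism, compute that the induced continuation is $(x,y) \mapsto (q(x,y), q(x,y))$, and then split it via the tensor-product equilibrium formula into $k_1 x = q(x,\sigma_2)$ and $k_2 y = q(\sigma_1, y)$. The only difference is cosmetic: the paper parses the lower part of the diagram as $\tau_R \circ (q \otimes \Delta_R^*)$ (a single cup preceded by a contravariant copy) rather than your covariant copy followed by two cups, but both readings yield the same coplay and a trivially true individual rationality relation.
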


\begin{proof}
Algebraically, this string diagram is
\[ \tau_R \circ (q \otimes \Delta_R^*) \circ (P_1 \otimes P_2) \]
Unwinding the definitions, we have
\begin{align*}
&(\sigma_1, \sigma_2) \E_{\tau_R \circ (q \otimes \Delta_R^*) \circ (P_1 \otimes P_2)} (\bullet, \bullet) \\
\iff &(\sigma_1, \sigma_2) \E_{P_1 \otimes P_2} (\bullet, k) \wedge \bullet \E_{\tau_R \circ (q \otimes \Delta_R^*)} (\P_{P_1 \otimes P_2} (\sigma_1, \sigma_2) \bullet, \bullet) \\
\iff &(\sigma_1, \sigma_2) \E_{P_1 \otimes P_2} (\bullet, k) \\
\iff &\sigma_1 \E_{P_1} (\bullet, k_1) \wedge \sigma_2 \E_{P_2} (\bullet, k_2) \\
\iff &\sigma_1 \in \varepsilon k_1 \wedge \sigma_2 \in \delta k_2
\end{align*}
where
\begin{align*}
k (x, y) &= \C_{\tau_R \circ (q \otimes \Delta_R^*)} \bullet ((x, y), \bullet) \\
&= \C_{q \otimes \Delta_R^*} \bullet ((x, y), \C_{\tau_R} \bullet (\P_{q \otimes \Delta_R^*} \bullet (x, y), \bullet)) \\
&= \C_{q \otimes \Delta_R^*} \bullet ((x, y), \C_{\tau_R} \bullet (q (x, y), \bullet)) \\
&= \C_{q \otimes \Delta_R^*} \bullet ((x, y), q (x, y)) \\
&= \Delta_R (q (x, y)) \\
&= (q (x, y), q (x, y)) \\
\end{align*}
\[ k_1 x = (\pi_1 \circ k) (x, \P_{P_2} \sigma_2 \bullet) = (\pi_1 \circ k) (x, \sigma_2) = q (x, \sigma_2) \]
\[ k_2 y = (\pi_2 \circ k) (\P_{P_1} \sigma_1 \bullet, y) = (\pi_2 \circ k) (\sigma_1, y) = q (\sigma_1, y) \]
\end{proof}

A two-player sequential game is defined in \cite{escardo11} to consist of the following data:
\begin{itemize}
\item Sets $X$, $Y$ of choices for each player, and $R$ of outcomes
\item Multivalued quantifiers $\varphi : (X \to R) \to \mathcal P R$, $\psi : (Y \to R) \to \mathcal P R$
\item An outcome function $q : X \times Y \to R$
\end{itemize}
A strategy profile for this game consists of a move $\sigma_1 : X$ for the first player and a contingent strategy $\sigma_2 : X \to Y$ for the second player. A strategy profile is called \emph{optimal} if
\begin{align*}
q (\sigma_1, \sigma_2 \sigma_1) &\in \varphi \lambda x . q (x, \sigma_2 x) \\
q (x, \sigma_2 x) &\in \delta \lambda y . q (x, y) \hbox{ for all } x : X
\end{align*}
(Note that the difference between selection functions and quantifiers is relatively unimportant: we could equally well define simultaneous games using quantifiers, and sequential games using selection functions.)

\begin{theorem}
An optimal strategy profile for this game is an equilibrium of the string diagram
\[ \begin{pspicture}(0,0)(8,9)
\rput(4,8){\ovalnode{P1}{$P_1$}}
\pnode(4,6.4){P1d}
\rput(5,4.4){\ovalnode{P2}{$P_2$}}
\rput(4,2){\ovalnode{q}{$q$}}
\pnode(4,0.4){qd}
\nccurve[angleA=-90,angleB=-90]{P1}{P1d}
\nccurve[angleA=0,angleB=100,ArrowInside=->,arrowscale=2]{P1d}{P2}
\nccurve[angleA=-100,angleB=60,ArrowInside=->,arrowscale=2]{P2}{q}
\nccurve[angleA=180,angleB=120,ArrowInside=->,arrowscale=2]{P1d}{q}
\nccurve[angleA=-90,angleB=-90]{q}{qd}
\nccurve[angleA=0,angleB=-70,ArrowInside=->,arrowscale=2]{qd}{P2} \nbput{$R$}
\nccurve[angleA=180,angleB=-120,ArrowInside=->,arrowscale=2]{qd}{P1} \naput{$R$}
\rput(4,5.6){$X$}
\rput(5,2.6){$Y$}
\end{pspicture} \]
where
\begin{align*}
\sigma_1 \E_{P_1} (\bullet, k_1) &\iff k_1 \sigma_1 \in \varphi k_1 \\
\sigma_2 \E_{P_2} (x, k_2) &\iff k_2 (\sigma_2 x) \in \psi k_2
\end{align*}
\end{theorem}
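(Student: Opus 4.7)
The plan is to read the string diagram as an algebraic composite of $P_1$, $P_2$, $q$, $\Delta_X$, a copying of the $R$-output, and two teleological units $\tau_R$. Theorem \ref{theorem-category} makes the result of unwinding the categorical composition independent of associativity, so one can fix any convenient bracketing and compute the four pieces of data $\Sigma$, $\P$, $\C$, $\E$ of the composite by successive application of the composition and tensor formulas. Because each non-decision component has trivial strategy set $1$, the composite strategy space is $X \times Y^X$, and a profile can be identified with $(\sigma_1, \sigma_2)$ as in the statement. Because each non-decision component has trivially true $\E$, the composite equilibrium condition $\bullet \E (\bullet, \bullet)$ reduces to the conjunction $\sigma_1 \E_{P_1} (\bullet, k_1) \wedge \sigma_2 \E_{P_2} (h_2, k_2)$ for a history $h_2$ and continuations $k_1 : X \to R$, $k_2 : Y \to R$ that are determined by the diagram.

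Next I would identify $h_2$, $k_1$, $k_2$ by threading through the composition formulas. The history $h_2$ comes from the play function of everything to the left of $P_2$: $\P_{P_1} \sigma_1 \bullet = \sigma_1$, copied by $\Delta_X$ and projected into $P_2$'s input, giving $h_2 = \sigma_1$. The continuations are propagated backward from the outermost trivial continuation $\bullet$ via the composition rule $k'(y) = \C_\H \sigma (y, k(\P_\H \sigma y))$. The teleological unit contributes $\C_{\tau_R} \bullet (r, \bullet) = r$, so whatever $R$-value appears on the forward leg of a cup is exactly what returns along its $R^*$ leg. For $k_1$, a hypothetical $x$ at $P_1$'s output is copied by $\Delta_X$, fed through $P_2$ (which plays $\sigma_2 x$) and into $q$, yielding $k_1 x = q(x, \sigma_2 x)$. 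For $k_2$, the history $\sigma_1$ is fixed at $q$'s first input and a hypothetical $y$ at $P_2$'s output feeds $q$'s second input, yielding $k_2 y = q(\sigma_1, y)$.

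Substituting these into the hypothesised equilibrium conditions of $P_1$ and $P_2$ gives $q(\sigma_1, \sigma_2 \sigma_1) \in \varphi \lambda x. q(x, \sigma_2 x)$ and $q(\sigma_1, \sigma_2 \sigma_1) \in \psi \lambda y. q(\sigma_1, y)$. The first is exactly the first optimality condition. The second is the second optimality condition specialised to $x = \sigma_1$, so the full optimality condition (which quantifies over all $x$) strictly implies it, giving optimal $\Rightarrow$ equilibrium as claimed; the converse fails in general because it only constrains $P_2$ at the on-path history. The main obstacle is the bookkeeping in the middle paragraph: once the coplay and play functions are correctly threaded through $\Delta_X$, $q$, $\Delta_R$, and the two cups, the identification of $k_1$ and $k_2$ reduces to direct substitution, and the rest of the argument is essentially tautological.
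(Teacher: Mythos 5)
Your proposal is correct and follows essentially the same route as the paper: read the diagram as an algebraic composite of $P_1$, $P_2$, $q$, the copies and the cup, unwind the composition and tensor formulas to obtain the history $\sigma_1$ for $P_2$ and the continuations $k_1 x = q(x, \sigma_2 x)$ and $k_2 y = q(\sigma_1, y)$, and note that optimality implies the resulting conjunction because the second optimality condition is only needed at $x = \sigma_1$. The only cosmetic difference is that the paper parses the feedback as a single cup $\tau_R$ precomposed with a contravariant copy $\Delta_R^*$ rather than as a covariant copy followed by two cups; both readings give the same continuation $(x,y) \mapsto (q(x,y), q(x,y))$.
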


\begin{proof}
Algebraically this string diagram is
\[ \tau_R \circ (q \otimes \Delta_R^*) \circ (((\id_X \otimes P_2) \circ \Delta_X) \otimes \id_R^*) \circ P_1 \]
Unwinding the definition, we have
\begin{align*}
&(\sigma_1, \sigma_2) \E_{\tau_R \circ (q \otimes \Delta_R^*) \circ (((\id_X \otimes P_2) \circ \Delta_X) \otimes \id_R^*) \circ P_1} (\bullet, \bullet) \\
\iff &(\sigma_1, \sigma_2) \E_{(((\id_X \otimes P_2) \circ \Delta_X) \otimes \id_R^*) \circ P_1} (\bullet, k_1) \\
\iff &\sigma_1 \E_{P_1} (\bullet, k_2) \wedge \sigma_2 \E_{((\id_X \otimes P_2) \circ \Delta_X) \otimes \id_R^*} (\P_{P_1} \sigma_1 \bullet, k_1) \\
\iff &\sigma_1 \E_{P_1} (\bullet, k_2) \wedge \sigma_2 \E_{(\id_X \otimes P_2) \circ \Delta_X} (\sigma_1, k_3) \\
\iff &\sigma_1 \E_{P_1} (\bullet, k_2) \wedge \sigma_2 \E_{\id_X \otimes P_2} (\P_{\Delta_X} \bullet \sigma_1, k_3) \\
\iff &\sigma_1 \E_{P_1} (\bullet, k_2) \wedge \sigma_2 \E_{P_2} (\sigma_1, k_4) \\
\iff &q (\sigma_1, \sigma_2 \sigma_1) \in \varphi \lambda x . q (x, \sigma_2 x) \wedge q (\sigma_1, \sigma_2 \sigma_1) \in \psi \lambda y . q (\sigma_1, y)
\end{align*}
where
\begin{align*}
k_1 (x, y) &= (q (x, y), q (x, y)) \\
k_2 x &= \C_{((\id_X \otimes P_2) \circ \Delta_X) \otimes \id_R^*} \sigma_2 (x, k_1 (\P_{((\id_X \otimes P_2) \circ \Delta_X) \otimes \id_R^*} \sigma_2 x)) \\
&= (\pi_2 \circ k_1) (\P_{((\id_X \otimes P_2) \circ \Delta_X) \otimes \id_R^*} \sigma_2 x) \\
&= (\pi_2 \circ k_1) (\P_{(\id_X \otimes P_2) \circ \Delta_X} \sigma_2 x) \\
&= (\pi_2 \circ k_1) (\P_{\id_X \otimes P_2} \sigma_2 (x, x)) \\
&= (\pi_2 \circ k_1) (\P_{\id_X} \bullet x, \P_{P_2} \sigma_2 x) \\
&= (\pi_2 \circ k_1) (x, \sigma_2 x) \\
&= q (x, \sigma_2 x) \\
k_3 (x, y) &= (\pi_1 \circ k_1) (x, y) \\
&= q (x, y) \\
k_4 y &= k_3 (\P_{\id_X} \bullet \sigma_1, y) \\
&= q (\sigma_1, y)
\end{align*}
\end{proof}

The converse does not hold, because the definition of optimal strategy profile generalises \emph{subgame-perfect equilibria}, which is an equilibrium refinement of Nash \cite{escardo12}. The definitions given in this paper generalise easily to subgame-perfect equilibria, but we do not do this because subgame-perfection, unlike Nash, is not decidable in general.

\section{Future directions}\label{future-directions}

The potential applications of a compositional, graphical game theory are numerous, especially in economics, and this paper also raises some interesting theoretical questions. We conclude by broadly giving some future research directions and questions, most of which are being explored by the author together with the other authors of \cite{hedges14b}.

\begin{itemize}
\item Possibly the most important theoretical concepts missing from this paper are repeated games and incomplete information games, both of which are ubiquitous in economic applications. This is work in progress.

\item Another important aspect of game theory that cannot be modelled in this way is the ability for the `shape' of a subgame to depend on a previous move, for example with the `moves of nature' used in the usual approach to incomplete information. The obvious approach to this is to use dependent types, which leads immediately into current research on type systems. A complication is that mechanical type inference has already proven invaluable in practice.

\item A promising approach to the semantics of the teleological unit is a strong resemblance to shift/reset operators for programming with delimited continuations \cite{danvy90,asai11}: a decision is analogous to the shift operator in that it captures a continuation, and the teleological unit is analogous to the reset operator in that it delimits a continuation.

\item Due to the difficulty of reasoning with continuations, computer support is vital for all but the most trivial applications. The author has developed a Haskell implementation, but it is extremely awkward to use because the Haskell type system does not unify types like $X \sim 1 \times X$ and $X \sim 1 \to X$, and so the user must manually keep track of these isomorphisms. As an intermediate step, a code generator for a domain specific language similar to Haskell's arrows \cite{paterson01} would be useful. (Unfortunately, for technical reasons it does not seem to be possible to use GHC's built-in arrow preprocessor.) Ultimately a graphical interface would be invaluable for these ideas to become accessible to working economists.

\item As a by-product of obtaining a compositional theory, we have the ability to model preferences of agents which are extremely different to utility maximisation or preference relations. This extends a line of work begun in \cite{hedges14b}, which uses fixpoint selection functions to model coordination and differentiation. Obvious next steps include modelling bounded rationality \cite{rubinstein98} and social concerns.

\item A potentially very powerful dimension is to vary the underlying category, as discussed in the introduction. The use of ordinary (possibilistic) nondeterminism in game theory is explored for example in \cite{pavlovic09,blumensath13,hedges14} and \cite[chapter 9]{lavalle06}, and work in progress by the author suggests that the order structure on possibilistic strategies is important. We also have experimental evidence that correlated equilibria \cite{aumann74} appear as a special case by using a commutative monad transformer stack in which a reader monad gives players read-only access to a shared randomising device.

\item Using noncommutative side-effects is potentially even more rewarding, but there is theoretical work to be done on graphical languages for Freyd categories. A major aim is to use strategies with mutable states to model learning, and individual rationality relations to specify that a strategy can be subjectively rational with respect to current epistemic knowledge, for example using methods of epistemic game theory \cite{perea12}.
\end{itemize}

\bibliographystyle{plain}
\bibliography{/Users/jules/Dropbox/Work/refs}

\end{document}